\begin{document}%

\newcommand{\ket}[1]{|#1\rangle}
\newcommand{\bra}[1]{\langle#1|}
\newcommand{\inner}[2]{\langle#1|#2\rangle}
\newcommand{\lr}\longrightarrow
\newcommand{\ra}\rightarrow
\newcommand{\tr}{{\rm Tr}}
\newcommand{\sgn}{{\rm sgn}}
\newcommand{\fsp}{{\rm span}}
\newcommand{\fsup}{{\rm supp}}
\newcommand{\fdg}{{\rm diag}}
\newcommand{\norm}[1]{\parallel#1\parallel}
\newcommand\e{\varepsilon}

\newtheorem{thm}{Theorem}
\newtheorem{Prop}{Proposition}
\newtheorem{Coro}{Corollary}
\newtheorem{Lemma}{Lemma}
\newtheorem{Def}{Definition}

\title{Randomized Algorithms and Lower Bounds for Quantum Simulation}
\author{Chi Zhang}
\email{cz2165@columbia.edu} \affiliation{Department of Computer
Science, Columbia University, New York, USA, 10027}
\date{\today}

\begin{abstract}
We consider deterministic and {\em randomized} quantum algorithms
simulating $e^{-iHt}$ by a product of unitary operators
$e^{-iA_jt_j}$, $j=1,\dots,N$, where $A_j\in\{H_1,\dots,H_m\}$,
$H=\sum_{i=1}^m H_i$ and $t_j > 0$ for every $j$. Randomized
algorithms are algorithms approximating the final state of the
system by a mixed quantum state. First, we provide a scheme to bound
the trace distance of the final quantum states of randomized
algorithms. Then, we show some randomized algorithms, which have the
same efficiency as certain deterministic algorithms, but are less
complicated than their opponentes. Moreover, we prove that both
deterministic and randomized algorithms simulating $e^{-iHt}$ with
error $\e$ at least have $\Omega(t^{3/2}\e^{-1/2})$ exponentials.
\end{abstract}

\pacs{03.67.Ac, 03.67.Lx}

\maketitle

\section{Introduction}

While the computational cost of simulating many particle quantum
systems using classical computers grows exponentially with the
number of particles, quantum computers nonetheless have the
potential to carry out the simulation efficiently \cite{Kassal}.
This property, pointed out by Feynman, is one of the founding ideas
of the field of quantum computation. The simulation problem is also
related to quantum walks and adiabatic optimization
\cite{adia_1,adia_2,walk_1,walk_2,walk_3,walk_4}.

A variety of quantum algorithms have been proposed to predict and
simulate the behavior of different physical and chemical systems. Of
particular interest are {\it splitting methods} that simulate the
unitary evolution $e^{-iHt}$, where $H$ is the system Hamiltonian,
by a product of unitary operators of the form $e^{-iA_jt_j}$,
$j=1,\dots,N$, where $A_j\in\{H_1,\dots,H_m\}$, $H=\sum_{i=1}^m H_i$
and assuming the $H_i$ do not commute.

A recent paper \cite{Berry} shows that high order splitting methods
\cite{Suzuki,Suzuki_2} can be used to derive bounds for $N$ that are
asymptotically tight. This work also provides algorithms that
achieve the upper bounds. However, the derived algorithms require
some of the $t_j$ to be negative, which may limit their application.
For instance, the algorithms cannot be used for the simulation of
diffusion operators, because there exists no inverse exponential
diffusion operator, as noted by Suzuki \cite{Suzuki} who proposed
the high order splitting methods. The reason is that for spiting
methods with order of convergence greater than or equal to three,
some of the $\{t_j\}$ must be negative \cite{Suzuki_2}.

In this paper, we consider deterministic and {\em randomized}
quantum algorithms simulating $e^{-iHt}$ using only positive
$\{t_j\}$. By randomized algorithms we mean algorithms approximating
the final state of the system by a mixed quantum state. We show
that:
\begin{enumerate}
\item
The increase in the trace distance of the quantum states in a
randomized algorithm is bounded from above by
$$2\norm{E(U_\omega) - U_0} + E(\norm{U_\omega-U_0}^2),$$
where $U_0$ is the unitary evolution being simulated, $U_\omega$
denotes the randomized algorithm and $E(\cdot)$ is the expectation.
\item
Deterministic and randomized algorithms simulating $e^{-iHt}$ by
approximating it by $\prod_{j=1}^N e^{-iA_jt_j}$ with error $\e$
must satisfy
$$N=\Omega(t^{3/2}\e^{-1/2}).$$
\item
The optimal deterministic algorithm is based on the
Baker-Campbell-Housdorf formula \cite{NC}.
\item
An optimal randomized algorithm is obtained by a direct application
of the Trotter formula \cite{NC}.
\end{enumerate}

\section{Randomized Algorithms for Quantum Simulation}

Let us now state the problem in more details, then discuss the
algorithms and their performance. A quantum system evolves according
to the Schr\"odinger equation
\begin{equation}\label{SchEqu}
i\frac{d}{dt}\ket{\psi(t)} = H\ket{\psi(t)},
\end{equation}
where $H$ is the system Hamiltonian. For a time-independent
$H$, the solution of the Schr\"odinger is
\begin{equation}\label{sol}
\ket{\psi(t)} = e^{-iHt}\ket{\psi_0},
\end{equation}
where $\ket{\psi_0}$ is the initial state at $t=0$. Here we assume
that $H$ is the sum of local Hamiltonians, i.e.,
\begin{equation}\label{sum}
H = \sum_{k=1}^m H_k,
\end{equation}
and all the $H_k$ are such that $e^{-iH_k\tau}$ can be implemented
efficiently for any $\tau>0$. Therefore, we will be using a product
of the unitary operators $U=\prod_{j=1}^N e^{-iA_jt_j}$, where
$A_j\in\{ H_1,\dots,H_m\}$, $t_j>0$, to simulate $U_0=e^{-iHt}$.
However, since the $H_k$ do not commute in general, it introduces an
error in the simulation. We measure this error using the trace
distance, as in \cite{Berry}. Out goal is to obtain tight bounds for
$N$ for algorithms achieving accuracy $\e$ in the simulation and to
show optimal algorithms.

Variations of this problem that do not set restrictions on $t_j$
have been extensively studied in the literature, see, e.g.,
\cite{Berry, Suzuki, NC, Zalka}. As far as we know only
deterministic algorithms have been considered. In this paper, we
propose a randomized model for simulating quantum systems, which
simplifies the design of algorithms without compromising their
efficiency.

In the randomized model, the sequence of unitary operators is
selected randomly according to a certain probability distribution.
The distribution can be realized either by \lq\lq coin-flips\rq\rq\
or by \lq\lq control qubits\rq\rq, which requires some ancillary
qubits. As a result, the algorithm is a product of a random sequence
of unitary operators $U_\omega =\prod_{j=1}^{N_\omega}
e^{-iA_{j,\omega}t_j}$ selected with probability $p_\omega$. Hence,
the final state of the quantum algorithm is the mixed state
\begin{equation}
\rho = \sum_{\omega} p_\omega U_{\omega}\ket{\psi_0}
\bra{\psi_0}U_\omega^\dagger
\end{equation}
For more general cases, where the input state of the simulation is
not exactly $\ket{\psi_0}$, but a different mixed state $\rho_0$,
the final state is $\rho = \sum_{\omega} p_\omega U_{\omega}\rho_0
U_\omega^\dagger$.

In order to analyze the efficiency of randomized algorithms, we show
an upper bound for the trace distance between the desired final
state and the one computed by a randomized algorithm.

\begin{Lemma}
Let $U_0$ be the unitary evolution being simulated by a set of
random unitary evolutions $U_\omega$ as we described above. Then the
trace distance between $\sigma =
U_0\ket{\psi_0}\bra{\psi_0}U_0^\dagger$ and $\rho$ is bounded from
above by
\begin{equation}\label{measure}
\begin{split}
&D(\rho_0,\ket{\psi_0}\bra{\psi_0})+2\norm{\sum_{\omega} p_\omega
U_\omega-U_0}+\sum_{\omega} p_\omega \norm{U_\omega-U_0}^2\\
=&D(\rho_0,\ket{\psi_0}\bra{\psi_0}) + 2\norm{E(U_\omega) - U_0} +
E(\norm{U_\omega-U_0}^2),
\end{split}
\end{equation}
where $D(\cdot)$ denotes the trace distance and $E(\cdot)$ denotes
the expectation.
\end{Lemma}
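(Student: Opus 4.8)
The plan is to decompose the trace distance with the triangle inequality into two independent pieces: the contribution of the imperfect input state $\rho_0$, and the contribution of the random approximation of $U_0$ acting on the ideal input $\ket{\psi_0}$. Concretely, I would introduce the intermediate state $\rho' := \sum_\omega p_\omega U_\omega\ket{\psi_0}\bra{\psi_0}U_\omega^\dagger$, namely the output of the randomized algorithm on the exact initial state, and use $D(\sigma,\rho)\le D(\sigma,\rho')+D(\rho',\rho)$.

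The term $D(\rho',\rho)$ is handled by the contractivity of the trace distance under quantum channels. The map $\Phi(X)=\sum_\omega p_\omega U_\omega X U_\omega^\dagger$ is a convex mixture of unitary conjugations, hence completely positive and trace preserving, and $\rho=\Phi(\rho_0)$, $\rho'=\Phi(\ket{\psi_0}\bra{\psi_0})$. Contractivity then gives $D(\rho',\rho)\le D(\ket{\psi_0}\bra{\psi_0},\rho_0)$, which is exactly the first summand; equivalently one may invoke the joint convexity of $D$ together with its invariance under a common unitary.

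For $D(\sigma,\rho')$ I would write $U_\omega=U_0+\Delta_\omega$ with $\Delta_\omega:=U_\omega-U_0$, and set $P_0:=\ket{\psi_0}\bra{\psi_0}$ and $\bar\Delta:=\sum_\omega p_\omega\Delta_\omega=E(U_\omega)-U_0$. Expanding and using $\sum_\omega p_\omega=1$ gives
\[
\rho'-\sigma=\bar\Delta P_0 U_0^\dagger+U_0 P_0\bar\Delta^\dagger+\sum_\omega p_\omega\,\Delta_\omega P_0\Delta_\omega^\dagger .
\]
The decisive feature is that the two terms linear in $\Delta_\omega$ already combine into the single operator $\bar\Delta$ before any norm is taken, which is what produces $\norm{E(U_\omega)-U_0}$ rather than the weaker $E(\norm{U_\omega-U_0})$. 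I would then bound $\rho'-\sigma$ in trace norm term by term, exploiting that $P_0$ has rank one: $\bar\Delta P_0 U_0^\dagger=\ket{\bar\Delta\psi_0}\bra{U_0\psi_0}$ and its adjoint are each rank one with trace norm at most $\norm{\bar\Delta}$, together contributing $2\norm{E(U_\omega)-U_0}$; each quadratic term $\Delta_\omega P_0\Delta_\omega^\dagger=\ket{\Delta_\omega\psi_0}\bra{\Delta_\omega\psi_0}$ is positive of trace norm $\norm{\Delta_\omega\psi_0}^2\le\norm{U_\omega-U_0}^2$, so the triangle inequality bounds their weighted sum by $\sum_\omega p_\omega\norm{U_\omega-U_0}^2=E(\norm{U_\omega-U_0}^2)$.

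The main obstacle is precisely this asymmetric bookkeeping of the linear and quadratic parts: the linear terms must be summed against $p_\omega$ \emph{inside} the trace norm, so the expectation stays within $\norm{\cdot}$, whereas the quadratic term is not linear in $\Delta_\omega$ and forces the triangle inequality to be applied only \emph{after} averaging, which is what yields $E(\norm{U_\omega-U_0}^2)$ and genuinely cannot be tightened to $\norm{E(\cdot)}$. A minor point to verify is that, although $\Delta_\omega$ is not Hermitian, the rank-one structure of $P_0$ collapses every trace norm above to the norm of a single vector $\Delta_\omega\ket{\psi_0}$ (or $\bar\Delta\ket{\psi_0}$), so no spectral information about $\Delta_\omega$ beyond $\norm{\Delta_\omega\psi_0}\le\norm{U_\omega-U_0}$ is needed.
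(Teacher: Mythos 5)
Your proof is correct, and its core is the same as the paper's: write $U_\omega = U_0+\Delta_\omega$, expand the conjugation, and observe that the two terms linear in $\Delta_\omega$ average to $\bar\Delta = E(U_\omega)-U_0$ \emph{inside} the trace norm, while the quadratic term can only be bounded after averaging, giving $E(\norm{U_\omega-U_0}^2)$; you also correctly identify this asymmetry as the crux. Where you genuinely differ is the treatment of the imperfect input: you split off $D(\rho,\rho')$ by the triangle inequality and dispose of it via contractivity (or joint convexity plus unitary invariance) of the trace distance under the mixed-unitary channel $\Phi(X)=\sum_\omega p_\omega U_\omega X U_\omega^\dagger$, and then run the expansion with the rank-one projector $P_0=\ket{\psi_0}\bra{\psi_0}$ in the middle. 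The paper instead performs a single algebraic expansion with the general input $\rho_0$ in the middle, so the input error appears as the extra term $U_0(\rho_0-\ket{\psi_0}\bra{\psi_0})U_0^\dagger$, whose trace norm equals $D(\rho_0,\ket{\psi_0}\bra{\psi_0})$ by unitary invariance. The trade-off: your route reduces every operator estimate to the elementary rank-one fact $\tr|\,\ket{a}\bra{b}\,| = \norm{a}\,\norm{b}$, but imports CPTP contractivity of the trace distance as a black box; the paper's route is one self-contained computation, but its term-by-term bounds require the H\"older-type inequality $\tr|A\rho_0 B|\le\norm{A}\,\norm{B}$ for a general density matrix $\rho_0$ sandwiched between non-Hermitian operators, which the paper uses implicitly without comment. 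Both arguments yield exactly the stated bound.
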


\begin{proof}
First, we calculate the difference of the output states $\rho_1$ and
$\rho'_1$, which is
\begin{equation}
\begin{split}
&\sum_\omega p_\omega U_\omega\rho_0U_\omega^{\dagger} - U_0\ket{\psi_0}\bra{\psi_0}U_0^{\dagger}\\
= &\sum_\omega(U_0+U_\omega-U_0)\rho_0(U_0+U_\omega-U_0)^{\dagger} - U_0\ket{\psi_0}\bra{\psi_0}U_0^{\dagger}\\
= &\sum_\omega p_\omega(U_\omega-U_0)\rho_0U_0^{\dagger}+\sum_\omega
p_\omega U_0\rho_0(U_\omega-U_0)+\sum_\omega
p_\omega(U_\omega-U_0)\rho_0(U_\omega-U_0)^{\dagger} + \sum_\omega
p_\omega
U_0\rho_0U_0^{\dagger}-U_0\ket{\psi_0}\bra{\psi_0}U_0^{\dagger}\\ =
&(\sum_\omega p_\omega U_\omega-U_0)\rho_0U_0^{\dagger} +
U_0\rho_0(\sum_\omega p_\omega U_\omega-U_0)^{\dagger}+\sum_\omega
p_\omega(U_\omega-U_0)\rho_0(U_\omega-U_0) +
U_0(\rho_0-\ket{\psi_0}\bra{\psi_0})U_0^{\dagger}.
\end{split}
\end{equation}
Hence,
\begin{equation}
\begin{split}
&D(\rho'_1,\rho_1) =\tr|\rho'_1-\rho_1|\\
\leq &\tr|(\sum_\omega p_\omega U_\omega-U_0)\rho_0U_0^{\dagger}|
+\tr|U_0\rho_0(\sum_\omega p_\omega U_\omega-U_0)^{\dagger}|
+\sum_\omega p_\omega\tr|(U_\omega-U_0)\rho_0(U_\omega-U_0)|
+\tr|\rho_0-\ket{\psi_0}\bra{\psi_0}|\\
\leq &2\norm{\sum_\omega p_\omega U_\omega - U_0} + \sum_\omega
p_\omega\norm{U_\omega-U_0}^2 + \tr|\rho_0-\ket{\psi_0}\bra{\psi_0}|\\
=&D(\rho_0,\ket{\psi_0}\bra{\psi_0})+2\norm{E(U_\omega) - U_0} +
E(\norm{U_\omega-U_0}^2).
\end{split}
\end{equation}
\end{proof}
Let $\Delta D = D(\rho_1,\rho'_1) -
D(\rho_0,\ket{\psi_0}\bra{\psi_0})$, which is the augment of trace
distance. From the above lemma, we know in the simulation, $\Delta
D$ is bounded from above by $\norm{E(U_\omega) - U_0}$ and
$E(\norm{U_\omega-U_0}^2)$. Moreover, it is easy to check that
$\Delta D = \Theta(\norm{E(U_\omega) - U_0})$ for certain $\rho_0$,
as well as $\Delta D = \Theta(E(\norm{U_\omega-U_0}^2))$ for some
other $\rho_0$. Therefore, the lower bound is also tight
asymptotically, i.e.,
$$\Delta D = \Theta(2\norm{E(U_\omega) - U_0} +
E(\norm{U_\omega-U_0}^2)).$$

For the convenience of the reader, below we give two examples of
randomized algorithms and we use the lemma above to analyze their
cost. It turns out that the second algorithm is optimal.

\begin{itemize}

\item {\bf Algorithm 1}

Divide the total evolution time $t$ into equal $K$ small segments of
size $\Delta t$.

Let $\rho_0=\ket{\psi_0}\bra{\psi_0}$ be the input to the first
stage of the algorithm.

Consider the $k$-th stage of the algorithm where the input is
$\rho_{k-1}$, $k=1,\dots,mK$. The algorithm chooses uniformly and
independently at random operators from $\{e^{-iH_1\Delta
t},\dots,e^{-iH_m\Delta t}\}$. Hence, the output of stage $k$ is
$$\rho_k=
\sum_{j=1}^{m} \frac{1}{m} e^{-iH_j\Delta t}\rho_{k-1}e^{iH_j\Delta
t}.$$

The final result of the algorithm is $\rho_{mK}$ and is used to
approximate $\sigma$.
\end{itemize}

Due to Lemma 1, the error of this algorithm for simulating
$e^{-iH\Delta t}$ by $m$ consecutive stages (i.e., by the stages
$km+1$ and $(k+1)m$, for any $k=0,\dots,K-1$) is bounded by two
elements, $\norm{E(U_\omega) - U_0}$ and $E(\norm{U_\omega-U_0}^2)$,
where $U_{\omega}$ is the product of the sequence $m$ operators.
Since the selection in each stage is independent and uniform,
\begin{equation}
E(U_\omega) = (\frac{1}{m}\sum_{j=1}^m H^{-iH_m\Delta t})^m = I -
i\sum_{j=1}^m H_j\Delta t + O(\Delta t^2).
\end{equation}
Hence, $\norm{E(U_\omega) - U_0} = O(\Delta t^2)$. Furthermore, for
any $\omega$, $U_\omega = I + O(\Delta t)$, then
$E(\norm{U_\omega-U_0}^2) = O(\Delta t^2)$. Therefore, the error in
each $m$ consecutive stages is $O(\Delta t^2)$. Thus, the total
error is $\e=O(K\Delta t^2)$ and the total number of exponentials
used is $N = mK = O(t^2/\e)$.

We remark that this is equal modulo a constant to the cost of the
deterministic algorithm that is based on a direct application of the
Trotter formula, i.e., the one that uses
$$\prod_{j=1}^m e^{-iH_j\Delta t}$$
to simulate $e^{-iH\Delta t}$. However, {\bf Algorithm 1} has
certain advantages over this deterministic algorithm. In the
deterministic algorithm, in order to simulate $e^{-iH\Delta t}$, we
need to store the current index $j$ of $e^{-iH_j\Delta t}$, for
$j=1,\cdots, m$. However, in {\bf Algorithm 1}, each stage is
independent and the algorithm is ``memoryless''.
\begin{itemize}
\item {\bf Algorithm 2}

Divide the total evolution time $t$ into equal $K$ small segments of
size $\Delta t$.

Let $\rho_0=\ket{\psi_0}\bra{\psi_0}$ be the input to the first
stage of the algorithm.

Consider the $k$-th stage of the algorithm where the input is
$\rho_{k-1}$, $k=1,\dots,K$. The algorithm select an operator
uniformly and independently at random from the set of operators
$$\{\prod_{j=1}^m e^{-iH_\sigma(j)\Delta t}: \sigma \text{ varies over all permutations on } m \text{ symbols }\}.$$
Then, the output of the $k$-th stage is
$$\rho_k=
\sum_{\sigma} \frac{1}{m!} (\prod_{j=1}^m e^{-iH_\sigma(j)\Delta
t})\rho_{k-1}(\prod_{j=1}^m e^{iH_\sigma(j)\Delta t}).$$

The final result of the algorithm is $\rho_{K}$ and is used to
approximate $\sigma$.
\end{itemize}

Let
\begin{equation}
\begin{split}
U_\sigma &= \prod_{j=1}^m e^{-iH_\sigma(j)\Delta t} \\
&= \prod_{j=1}^m (I-iH_{\sigma(j)}\Delta t - \frac{1}{2}
H^2_{\sigma(j)}\Delta t^2 + O(\Delta t^3))\\
& = I - i\sum_{j=1}^m H_{\sigma(j)} \Delta t -
\frac{1}{2}\sum_{j=1}^m H^2_{\sigma(j)}\Delta t^2
-\sum_{j<k}\sum_{j<k}H_{\sigma(j)}H_\sigma(k)\Delta
t^2 + O(\Delta t^3)\\
& = I - i\sum_{j=1}^m H_j\Delta t - \frac{1}{2}\sum_{j=1}^m
H^2_j\Delta t^2 -\frac{1}{2}\sum_{j<k}H_{\sigma(j)}H_\sigma(k)\Delta
t^2 + O(\Delta t^3).
\end{split}
\end{equation}
In each stage, the simulating operator $U_{\omega}$ could be
$U_\sigma$ with probability $1/m!$, for each $\sigma$ in the
permutations on $m$ symbols. Since, for each $\sigma$,
$\norm{U_\sigma - U_0} = O(\Delta t^2)$, so
$E(\norm{U_\omega-U_0}^2) = O(\Delta t^4)$. Moreover, $E(U_\omega) =
I -i\sum_{j=1}^mH_j\Delta t - \frac{1}{2}(\sum_{j=1}^mH_j)^2 \Delta
t^2 + O(\Delta t^3)$, hence $\norm{E(U_\omega)-U_0} = O(\Delta
t^3)$. Due to Lemma 1, the error of this algorithm for simulating
$e^{-iH\Delta t}$ in each stage is $O(\Delta t^3)$. Thus the total
error $\e=O(K\Delta t^3)$. Hence, for a given $\e$ the value of $K$
in {\bf Algorithm 2} is smaller than that in {\bf Algorithm 1}. The
total number of exponentials used is $N = mK = O(t^{3/2}/\e^{1/2})$.

We remark that this is equal modulo a constant to the cost of a
deterministic algorithm solving the problem. The difference is that
the deterministic algorithm is more slightly more complicated than
the one discussed in the previous item. It is based on the
Baker-Campbell-Housdorf formula (Strang splitting) and uses
$$\prod_{j=1}^m e^{-iH_j\Delta t/2}\prod_{j=m}^{1} e^{-iH_j\Delta t/2}$$
to simulate $e^{-iH\Delta t}$.

\section{Lower Bounds for Randomized Algorithms}

In fact, {\bf Algorithm 2} is asymptotically optimal among all
randomized algorithms simulating the evolution of the quantum
system. Before proving the optimality of {\bf Algorithm 2}, we start
with a lemma.

\begin{Lemma}
For any $0\leq x_i\leq 1$, $i=1,\cdots,N$, and $\sum_{i=1}^N x_i =
2$, let $S$ be the sum of all elements in $\{x_ix_jx_k: i<j<k, 2|k -
i, 2\nmid j - i\}$. Then $S < 1/3$.
\end{Lemma}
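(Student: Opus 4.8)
The plan is to recast $S$ as a single weighted sum over the middle index and then recognize it as a path integral that can be maximized explicitly. First I would group the admissible triples by their middle element $j$. The conditions $2\mid k-i$ and $2\nmid j-i$ say exactly that $i\equiv k$ while $j$ has the opposite parity, so in a valid triple $i<j<k$ both outer indices have parity opposite to that of $j$. Summing over $i<j<k$ therefore factorizes as
\begin{equation}
S=\sum_j x_j\,L_j R_j,\qquad L_j=\sum_{i<j,\ i\not\equiv j} x_i,\quad R_j=\sum_{k>j,\ k\not\equiv j} x_k .
\end{equation}
Writing $P=\sum_{i\ \mathrm{odd}} x_i$ and $Q=\sum_{i\ \mathrm{even}} x_i$, so that $P+Q=2$, and using that every odd position lies strictly on one side of an even $j$ (and vice versa), one gets $L_j+R_j=P$ for even $j$ and $L_j+R_j=Q$ for odd $j$. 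Hence, with $O_{<j},E_{<j}$ denoting the odd/even masses accumulated strictly before $j$,
\begin{equation}
S=\sum_{j\ \mathrm{even}} x_j\,O_{<j}(P-O_{<j})+\sum_{j\ \mathrm{odd}} x_j\,E_{<j}(Q-E_{<j}).
\end{equation}

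Next I would read this off as an exact path integral. Sweeping the positions $1,\dots,N$ in order and recording the pair $(p,q)=(\text{odd mass so far},\text{even mass so far})$ traces a monotone staircase from $(0,0)$ to $(P,Q)$: an odd position of weight $x_j$ is a rightward step of length $x_j$ taken at height $q=E_{<j}$, and an even position is an upward step at abscissa $p=O_{<j}$. Since the integrand is constant along each step, the two sums above are exactly
\begin{equation}
S=\int_{\mathrm{right}} q(Q-q)\,dp+\int_{\mathrm{up}} p(P-p)\,dq=:I(\gamma),
\end{equation}
the integral of the $1$-form $\omega=q(Q-q)\,dp+p(P-p)\,dq$ along the staircase $\gamma$. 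Consequently $S$ is bounded by the supremum of $I$ over all monotone paths from $(0,0)$ to $(P,Q)$ with $P+Q=2$. I note that neither this reduction nor the bound below uses $x_i\le 1$; the essential hypotheses are $x_i\ge 0$ and $\sum_i x_i=2$.

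The core of the argument is then to maximize the relaxed functional $I$. For fixed endpoints its dependence on the path is governed, through Green's theorem, by $d\omega=\big[(P-Q)-2(p-q)\big]\,dp\wedge dq$, which changes sign across the line $p-q=(P-Q)/2$; the maximizing monotone path should hug this line, and in the balanced case $P=Q=1$ it is precisely the diagonal. Evaluating $I$ on the diagonal $p=P\theta,\ q=Q\theta$ gives $I=\tfrac16 PQ(P+Q)=\tfrac13 PQ$, and since $PQ\le\big(\tfrac{P+Q}{2}\big)^2=1$ one obtains $I\le 1/3$, with equality only when simultaneously $P=Q=1$ and the path is the full diagonal. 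I expect the obstacle to be twofold: (i) turning the sign analysis of $d\omega$ into a rigorous proof that the balanced diagonal path is the \emph{global} maximizer for every admissible $P,Q$, rather than merely a critical path; and (ii) upgrading $\le$ to the strict inequality. The latter holds because any genuine finite configuration is a proper staircase, never the smooth diagonal, so a strictly positive defect remains (and generically $PQ<1$ as well); the value $1/3$ is only approached in the limit of infinitely many equal weights $x_i=2/N$ with $P=Q=1$, where $S=\tfrac{8}{N^3}\,\#\{\mathrm{admissible\ triples}\}\to 1/3$ from below. Making this defect estimate quantitative for all finite $N$ is the step I would complete last.
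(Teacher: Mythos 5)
Your reduction is correct and genuinely different from the paper's argument: the identity $S=\sum_j x_j L_j R_j$, its reformulation as the exact line integral $I(\gamma)=\int_\gamma q(Q-q)\,dp+p(P-p)\,dq$ over the monotone staircase $\gamma$, and the observation that $x_i\le 1$ is never needed are all sound. (The paper instead argues by induction on $N$ combined with a Lagrange-multiplier analysis of interior critical points, concluding that the extremal configuration is $x_1=\cdots=x_N=2/N$ with $N$ odd, where $S=\frac13(1-1/N^2)$.) However, your optimization step, which you yourself flag as unfinished, contains not just a gap but an error: the diagonal is \emph{not} the maximizing monotone path when $P\neq Q$, so the chain ``evaluate on the diagonal, get $PQ/3$, then use $PQ\le 1$'' does not bound $S$, and your stated plan to prove that the diagonal is the global maximizer ``for every admissible $P,Q$'' cannot succeed.

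Concretely, take $P=3/2$, $Q=1/2$, so $a:=(P-Q)/2=1/2$. The monotone path running right from $(0,0)$ to $(1/2,0)$, then along the line $p-q=a$ up to $(1,1/2)$, then right to $(3/2,1/2)$, gives
\begin{equation*}
I=\int_0^{1/2}\Bigl[q\bigl(\tfrac12-q\bigr)+\bigl(q+\tfrac12\bigr)(1-q)\Bigr]\,dq=\frac{7}{24}>\frac14=\frac{PQ}{3},
\end{equation*}
and staircases approximating this path (hence genuine finite configurations $\{x_i\}$) also exceed $PQ/3$. The true maximizer for $P\neq Q$ is exactly this ``hugging'' path, with value $\frac13\bigl(1-|a|^3\bigr)$; this is still below $1/3$, so your program survives, but the bound must be established for this path rather than the diagonal. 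The repair is the Green's-theorem comparison you sketch, made rigorous: any monotone competitor stays in the strip $0\le q\le Q$, every region where it deviates above the hugging path lies in $\{p-q<a\}$ (where $d\omega>0$), every region where it deviates below lies in $\{p-q>a\}$ (where $d\omega<0$), and splitting at crossing points then yields $I(\gamma)\le\frac13(1-|a|^3)$ for all monotone $\gamma$. Strictness for finite configurations then follows since for $P\neq Q$ the supremum is already below $1/3$, while for $P=Q=1$ a finite staircase encloses a region of positive area with the diagonal on which $d\omega$ has a strict sign. Until these steps are written out, the proposal does not yet prove the lemma.
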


\begin{proof}
For $N = 3$, it is easy to check $S \leq (2/3)^3 < 1/3$.

Assume that, for $N<M$, the conclusion holds.

For the case $N=M$, the global minimum of $S$ will be achieved at a
local minimum or the border. Moreover, if the global minimum is
obtained at the border, which means some $x_i = 0$ or $1$, it
reduces to the case $N<M$. Then, we only consider its local minimum
and assume $x_i \neq 0$ for each $i$.

Let $f(x) = S - \lambda(\sum_i^N x_i -2)$. In any local minimum,
$\frac{\partial f}{\partial x_i} = 0$, for each $x_i$. Then
$$\frac{\partial f}{\partial x_i} = \sum_{\substack{i<j<k \\ 2\nmid j-i\\ 2|k-i}} x_jx_k+\sum_{\substack{j<i<k\\ 2\nmid j-i\\ 2\nmid k-i}} x_jx_k
+\sum_{\substack{j<k<i\\ 2| j-i\\ 2\nmid k-i}} x_jx_k-\lambda = 0$$,
and
$$\frac{\partial f}{\partial x_{i+2}} = \sum_{\substack{i+2<j<k \\ 2\nmid j-i\\ 2|k-i}} x_jx_k+\sum_{\substack{j<i+2<k\\ 2\nmid j-i\\ 2\nmid k-i}} x_jx_k
+\sum_{\substack{j<k<i+2\\ 2| j-i\\ 2\nmid k-i}} x_jx_k-\lambda =
0.$$
Combined these two equations, we have
$$x_{i+1}(\sum_{\substack{k\geq i+2\\2|k-i}}x_k + \sum_{\substack{k\leq i-1\\2\nmid k-i}}x_k
- \sum_{\substack{k\geq i+3\\2\nmid k-i}}x_k - \sum_{\substack{k\leq
i\\2|k-i}}x_k) = 0.$$ From the assumption, $x_{i+1}\neq 0$, we have
$$\cdots+x_{i-3}+x_{i-1}+x_{i+2}+x_{i+4}+\cdots = \cdots+x_{i-2}+x_i
+ x_{i+3}+x_{i+5}+\cdots.$$ Then, we consider $\frac{\partial
f}{\partial x_{i+1}} = 0$ and $\frac{\partial f}{\partial x_{i+3}} =
0$, which can derive $$\cdots+x_{i-1}+x_{i+1}+x_{i+4}+x_{i+5}+\cdots
= \cdots+x_{i-2}+x_i + x_{i+3}+x_{i+5}+\cdots.$$ Combine them
together, we have $x_{i+1} = x_{i+2}$. Therefore,
$x_2=x_3=\cdots=x_{N-1}$. Then, by considering $\frac{\partial
f}{\partial x_{1}} = \frac{\partial f}{\partial x_{2}}$ and
$\frac{\partial f}{\partial x_{N}} = \frac{\partial f}{\partial
x_{N-1}}$, we have $x_1 = 0$ when $N$ is even; and
$x_1=x_2=\cdots=x_N$ when $N$ is odd. Since the first case is
contradict to our assumption, we need only consider the case $N$ is
odd. Let $N=2K+1$, then each term in $S$ is $(\frac{2}{N})^3$, and
there are $\frac{1}{6}K(K+1)(2K+1)$ terms. Therefore, at the local
minimum
$$S = \frac{1}{6}K(K+1)(2K+1)(\frac{2}{2K+1})^3 = \frac{1}{3}(1-\frac{1}{N^2}) < \frac{1}{3}.$$
Hence, for any $N$ the conclusion holds.
\end{proof}

From the above Lemmas, we have the following two theorems.

\begin{thm}
For both deterministic or randomized algorithms, the error of
simulating $e^{-iH\Delta t}$ is no less than $\Omega(\Delta t^3)$.
\end{thm}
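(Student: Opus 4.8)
The plan is to lower-bound the worst-case error by $\Omega(\Delta t^3)$. By Lemma 1 together with the tightness remark following it, the worst-case increase in trace distance is $\Theta(2\norm{E(U_\omega)-U_0}+E(\norm{U_\omega-U_0}^2))$; since both summands are nonnegative it suffices to show $\norm{E(U_\omega)-U_0}=\Omega(\Delta t^3)$ for every randomized algorithm, the deterministic case being the point-mass distribution. I would fix a hard instance with $m=2$ and specific non-commuting $H_1,H_2$ (e.g.\ Pauli operators) chosen so that the operators $I,H_1,H_2,H_1^2,H_2^2,\{H_1,H_2\},[H_1,H_2],C_1,C_2$ are linearly independent, where $C_1=[H_1,[H_1,H_2]]$ and $C_2=[H_2,[H_1,H_2]]$. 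Then $\norm{\alpha C_1+\beta C_2}\gtrsim|\alpha|+|\beta|$, so it is enough to show that the third-order coefficients $\alpha,\beta$ of $E(U_\omega)-U_0$ cannot both be forced to vanish.

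Next I would Taylor-expand $E(U_\omega)=\sum_\omega p_\omega U_\omega$ in powers of $\Delta t$ and match orders against $U_0=I-iH\Delta t-\tfrac12 H^2\Delta t^2+O(\Delta t^3)$. At first order, matching forces the expected total evolution time on each $H_k$ to equal $\Delta t$, since otherwise the error is already $\Omega(\Delta t)$. At second order, the correction splits into a symmetric part, coming from the variance of the per-realization total times (a combination of $H_1^2,H_2^2,\{H_1,H_2\}$), and an antisymmetric part proportional to $[H_1,H_2]$. By linear independence these cannot cancel one another, so both must vanish separately: the symmetric part vanishes only if every realization individually spends total time exactly $\Delta t$ on each $H_k$, and the antisymmetric part gives the familiar second-order condition. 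If either fails the error is $\Omega(\Delta t^2)\ge\Omega(\Delta t^3)$, so I may assume both hold. In particular each realization now has positive times summing to $\Delta t$ on each Hamiltonian; merging adjacent equal factors I may assume the sequence alternates between $H_1$ and $H_2$, and rescaling by $\Delta t$ yields normalized times $x_1,\dots,x_N\in[0,1]$ with $\sum_i x_i=2$, which is exactly the setting of Lemma 2.

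It then remains to analyze the third order. Using the Magnus (or BCH) expansion, the surviving third-order term of each alternating product is $i\Delta t^3(\alpha_\omega C_1+\beta_\omega C_2)$, where $\alpha_\omega,\beta_\omega$ are signed sums of triple products $x_ix_jx_k$ over the patterns $H_1\cdots H_2\cdots H_1$ and $H_2\cdots H_1\cdots H_2$ respectively; by the alternation these are precisely the triples $i<j<k$ with $i\equiv k$ and $j\not\equiv i\pmod 2$ appearing in Lemma 2. Averaging, the third-order coefficients of $E(U_\omega)-U_0$ are $E(\alpha_\omega)$ and $E(\beta_\omega)$. The heart of the argument is that these cannot both vanish: Lemma 2 bounds the relevant positive triple-sum strictly below $1/3$, which is exactly the value a cancellation $\alpha_\omega=\beta_\omega=0$ would require, yielding a uniform inequality of the form $\alpha_\omega+\beta_\omega\ge c>0$ for every admissible product. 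Since this inequality is preserved under the convex average, at least one of $E(\alpha_\omega),E(\beta_\omega)$ is bounded away from zero, so $\norm{E(U_\omega)-U_0}\ge c'\Delta t^3$.

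The main obstacle I anticipate is the third-order bookkeeping combined with the randomization: one must verify that Lemma 2 applies after the second-order reduction forces per-realization time-matching, pin down the exact sign pattern and normalization so that the cancellation threshold is precisely the constant $1/3$ of Lemma 2, and confirm that the resulting inequality survives the convex combination defining $E(U_\omega)$ rather than holding only for a single deterministic product.
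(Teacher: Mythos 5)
Your proposal is correct in essence and follows the same route as the paper's proof: reduce to two non-commuting Hamiltonians, force each realization to alternate between $H_1$ and $H_2$ with normalized times $x_1,\dots,x_N$ summing to $2$, and invoke Lemma 2's threshold $S<1/3$ to show that the third-order part of $E(U_\omega)-U_0$ cannot be cancelled, the deficit surviving the convex average. The differences are in the bookkeeping, and they cut both ways. The paper never passes through the Magnus/BCH expansion or the nested commutators $C_1,C_2$: it tracks the coefficients of the \emph{words} $H_1H_2H_1$ and $H_2H_1H_2$ in the product expansion, each of which must equal $1/6$ (sum $1/3$) for third-order matching; word coefficients are linear in $U_\omega$, so they commute with the expectation, and no per-realization second-order reduction is needed at all. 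Correspondingly, the paper enforces per-realization time matching via the second term of Lemma 1 (a bad $\omega$ gives $\norm{U_\omega-U_0}^2=\Omega(\Delta t^2)$, hence $E(\norm{U_\omega-U_0}^2)=\Omega(\Delta t^2)$), whereas your variance argument ($E(s_\omega)=E(s_\omega^2)=1$ forces $s_\omega\equiv 1$) does the same work entirely inside $\norm{E(U_\omega)-U_0}$ --- a clean alternative, at the price of having to verify that the per-realization $[H_1,H_2]\Delta t^2$ terms and their cross terms with $-iH\Delta t$ average away, so that $E(\alpha_\omega)$ and $E(\beta_\omega)$ really are the only third-order data in the expectation.

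Two caveats, neither fatal. First, your parenthetical choice of Pauli operators violates your own independence requirement: for single-qubit Paulis, $[H_1,[H_1,H_2]]\propto H_2$, $[H_2,[H_1,H_2]]\propto H_1$, and $H_k^2=I$, so the nine listed operators are far from independent; a generic pair must be used instead, and this matters because the degeneracy would allow an algorithm to cancel third-order errors by $O(\Delta t^2)$ perturbations of its evolution times. Second, the uniform constant $c>0$ you assert is not what Lemma 2 supplies: its proof yields only $S\le\frac{1}{3}(1-1/N^2)$, a deficit that vanishes as the number of exponentials per realization grows, and indeed the theorem as literally stated fails without a bound on that number (simulating $e^{-iH\Delta t}$ by $k$ Trotter substeps of length $\Delta t/k$ gives error $O(\Delta t^3/k^2)$). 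The paper's proof makes exactly the same leap (from ``always less than $1/3$'' to ``$\Omega(\Delta t^3)$''), so your proposal matches it in rigor; both arguments are valid only under the implicit assumption, used explicitly in the proof of Theorem 2, that each stage uses $O(1)$ exponentials.
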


\begin{proof}
Since deterministic algorithms are special cases of randomized
algorithms, it is enough to consider randomized algorithms. Assume
$e^{-iH\Delta t}$ is simulated by $U_{\omega}$ with probability
$p_{\omega}$. Consider the Hamiltonians $H_1$ and $H_2$, in a given
$U_\omega$, let $\alpha_1\Delta t, \alpha_2\Delta t, \cdots,
\alpha_K\Delta t$ be the total evolution time of $H_1$ between two
consecutive evolution of $H_2$, while $\beta_1\Delta t,
\beta_2\Delta t,\cdots, \beta_{K'}\Delta t$ are the total evolution
time of $H_2$ between two consecutive evolution of $H_1$. For
example, if $$U_\omega = e^{-iH_1\lambda_1\Delta
t}e^{-iH_2\lambda_2\Delta t}e^{-iH_3\lambda_3\Delta
t}e^{-iH_2\lambda_4\Delta t}e^{-iH_1\lambda_5\Delta
t}e^{-iH_4\lambda_6\Delta t}e^{-iH_1\lambda_7\Delta t},$$ then
$\alpha_1 = \lambda_1$, $\alpha_2 = \lambda_5+\lambda_7$ and
$\beta_1=\lambda_2+\lambda_4$. So, it is easy to see $|K-K'|\leq 1$.
Due to Lemma 1, the difference of trace distance is decided by
$E(\norm{U_\omega-U_0}^2)$ and $\norm{E(U_\omega) - U_0}$. If
$\sum_{j=1}^K \alpha_j \neq 1$ or $\sum_{j=1}^{K'} \beta_j \neq 1$
for some $\omega$, $\norm{U_\omega-U_0}^2 = \Omega(\Delta t^2)$,
hence $E(\norm{U_\omega-U_0}^2) = \Omega(\Delta t^2)$. Hence, we
only need to consider the situation $\sum_{j=1}^K \alpha_j = 1$ and
$\sum_{j=1}^{K'} \beta_j = 1$. Let us focus on the terms
$iH_1H_2H_1\Delta t^3$ and $iH_2H_1H_2\Delta t^3$. In $e^{-iH\Delta
t}$, each of which has coefficients $1/6$. If the simulation has an
error less than $O(\Delta t^3)$, the coefficients of them in
$E(U_\omega)$ must also be $1/6$, therefore the sum of them should
be $1/3$. However, we will show in every $U_\omega$, the sum of
these two coefficients is less than $1/3$. Without loss of
generality, assume $K\geq K'$, let $x_{2j-1} = \alpha_j$, for
$j=1,\cdots,K$, and $x_{2j}=\beta_j$, for $j=1,\cdots,K'$. Then, the
coefficient of $iH_1H_2H_1\Delta t^3$ is the sum of $x_jx_kx_l$,
where $j<k<l$, $j,l$ are odd, and $k$ is even, while the coefficient
of $iH_1H_2H_1\Delta t^3$ is the sum of $x_jx_kx_l$, where $j<k<l$,
$j,l$ are even, and $k$ is odd. Since $\sum_{j=1}^{K+K'} x_j =
\sum_{j=1}^K \alpha_j + \sum_{j=1}^{K'} \beta_j = 2$, due to Lemma
2, the sum of the coefficients is always less than $1/3$. Since
there always exists $\Theta(\Delta t^3)$ term in any simulation, the
error of any simulation is no less than $\Omega(\Delta t^3)$.
\end{proof}

Therefore, we have the main theorem.

\begin{thm}\label{them1}
Any deterministic or randomized algorithm simulating $e^{-iHt}$ by
approximating it by $\prod_{j=1}^N e^{-iA_jt_j}$, $t_j>0$, with
error $\e$ satisfies
$$N=\Omega(t^{3/2}\e^{-1/2}).$$
\end{thm}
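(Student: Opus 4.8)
The plan is to leverage Theorem 1 together with a subdivision argument. Theorem 1 tells us that simulating $e^{-iH\Delta t}$ over a single short segment of length $\Delta t$ incurs an error of at least $\Omega(\Delta t^3)$ in trace distance, for any algorithm (deterministic or randomized) using only positive times. The strategy is to take the full evolution $e^{-iHt}$ and partition the total time $t$ into $K$ consecutive blocks, each simulating $e^{-iH\Delta t}$ with $\Delta t = t/K$, so that the overall algorithm uses $N$ exponentials distributed among these blocks. The goal is to turn the per-segment lower bound into a constraint relating $N$ and the total error $\e$.

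First I would make precise how a length-$N$ product $\prod_{j=1}^N e^{-iA_jt_j}$ can be viewed as simulating $K$ segments of duration $t/K$ each. One natural way is to argue that for any algorithm achieving total error $\e$, and for any chosen $K$, there must be some block of evolution of length $t/K$ whose simulation error is controlled; since trace distance satisfies the triangle inequality (errors over composed segments add at most linearly), if each of the $K$ segments contributed error $\Omega((t/K)^3)$, the total error would be at least $\Omega(K\cdot(t/K)^3)=\Omega(t^3/K^2)$. For the algorithm to have total error at most $\e$, we therefore need $t^3/K^2=O(\e)$, i.e. $K=\Omega(t^{3/2}\e^{-1/2})$. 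Finally I would relate $K$ to $N$: since simulating each nontrivial segment requires at least a constant number of exponentials (indeed at least the $m$ distinct $H_k$ must appear, or at minimum a constant $\geq 1$), we obtain $N=\Omega(K)=\Omega(t^{3/2}\e^{-1/2})$.

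The hard part will be justifying rigorously that the total error accumulates \emph{additively} across segments in a way that lets the per-segment $\Omega(\Delta t^3)$ bound aggregate to $\Omega(t^3/K^2)$, rather than allowing cancellations or a clever global sharing of exponentials across segment boundaries that beats the naive block decomposition. In particular, one must ensure that an adversarial algorithm cannot amortize the cubic error of one segment against another, and that the lower bound of Theorem 1 applies to each block \emph{as embedded} in the full product. I would address this by noting that the dominant $\Delta t^3$ error term identified in Theorem 1 arises from the commutator structure (the coefficients of $H_1H_2H_1$ and $H_2H_1H_2$), which is a local feature of each segment and cannot be cancelled by operations outside that segment; hence the errors are genuinely independent contributions that sum. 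With additivity established, the arithmetic $K=\Omega(t^{3/2}\e^{-1/2})$ and hence $N=\Omega(t^{3/2}\e^{-1/2})$ follows directly, matching the upper bound achieved by \textbf{Algorithm 2}.
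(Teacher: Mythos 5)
Your proposal follows essentially the same route as the paper: decompose the simulation into $K$ segments, apply the per-segment $\Omega(\Delta t^3)$ lower bound of Theorem 1, sum to get total error $\Omega(t^3/K^2)$, and conclude $K=\Omega(t^{3/2}\e^{-1/2})$ and $N=\Omega(K)$. The only cosmetic difference is that the paper defines the segments by grouping the algorithm's own exponentials into stages of constant size (so the stage durations $t_j$ may be unequal, and one minimizes $\sum_j t_j^3$ subject to $\sum_j t_j=t$), whereas you fix equal time blocks of length $t/K$; the additivity of segment errors that you correctly flag as the delicate step is likewise asserted without further justification in the paper's proof.
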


\begin{proof}
Assume the simulation is comprised of $K$ stages, and in the $j$-th
stage, there are constant exponentials used to simulate
$e^{-iHt_j}$, where $\sum_{j=1}^N = t_j$. From the above theorem,
the final error is $\Omega(\sum_{j=1}^K t_j^3)$, the minimum of
which is $\Omega(\frac{t^3}{K^2})$. Hence, to grantee the final
error is bounded by $\e$, $K=\Omega(t^{3/2}\e^{-1/2})$. Therefore,
$N=\Omega(K)=\Omega(t^{3/2}\e^{-1/2})$.
\end{proof}

From Theorem \ref{them1}, it is straightforward to obtain the
following two Corollaries.

\begin{Coro}
The deterministic algorithm based on the Baker-Campbell-Housdorf
formula (Strang splitting) is asymptotically optimal.
\end{Coro}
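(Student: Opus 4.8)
The plan is to show that the exponential count of the Strang-splitting algorithm matches, up to a constant factor, the universal lower bound $N = \Omega(t^{3/2}\e^{-1/2})$ proved in Theorem \ref{them1}; since no algorithm can do asymptotically better, this establishes optimality. Thus the corollary reduces to verifying an upper bound $N = O(t^{3/2}\e^{-1/2})$ for the Strang method and then invoking Theorem \ref{them1} for the matching lower bound.

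First I would establish the per-stage error. Writing $S(\Delta t) = \prod_{j=1}^m e^{-iH_j\Delta t/2}\prod_{j=m}^{1}e^{-iH_j\Delta t/2}$, the key point is that this product is palindromic, hence symmetric under time reversal: $S(-\Delta t) = S(\Delta t)^\dagger = S(\Delta t)^{-1}$, exactly as the target $e^{-iH\Delta t}$. Expanding $S(\Delta t)$ in powers of $\Delta t$ as was done for $U_\sigma$ in {\bf Algorithm 2}, the order-$\Delta t^0$ and order-$\Delta t^1$ terms give $I - iH\Delta t$, and the symmetric arrangement of the half-step factors cancels the commutator contributions at order $\Delta t^2$, leaving precisely $-\frac{1}{2}H^2\Delta t^2$, which agrees with $e^{-iH\Delta t}$. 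The symmetry then forces the leading error to be of odd order, so $\norm{S(\Delta t) - e^{-iH\Delta t}} = O(\Delta t^3)$.

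Next I would accumulate the error across stages. Dividing $t$ into $K$ segments of length $\Delta t = t/K$, and using that for unitary operators the global error is subadditive, one obtains $\norm{S(\Delta t)^K - e^{-iHt}} \le K\,\norm{S(\Delta t) - e^{-iH\Delta t}} = O(K\Delta t^3) = O(t^3/K^2)$. Requiring this to be at most $\e$ forces $K = O(t^{3/2}\e^{-1/2})$; since each stage uses a number of exponentials that is constant in $t$ and $\e$, the total is $N = O(t^{3/2}\e^{-1/2})$. Combining this with the lower bound $N = \Omega(t^{3/2}\e^{-1/2})$ from Theorem \ref{them1} yields $N = \Theta(t^{3/2}\e^{-1/2})$, and hence the asymptotic optimality of the Strang splitting.

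The main obstacle is the per-stage bound $\norm{S(\Delta t) - e^{-iH\Delta t}} = O(\Delta t^3)$: this is exactly where the symmetry of the construction is essential, since the naive Trotter product $\prod_j e^{-iH_j\Delta t}$ carries an uncancelled second-order commutator error of size $\Theta(\Delta t^2)$, one order worse. A secondary point requiring care is the accumulation step, where one must ensure the local errors add linearly, via the telescoping inequality for products of unitaries, rather than compounding multiplicatively.
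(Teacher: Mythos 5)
Your proposal is correct and follows the same route the paper intends: the paper declares the corollary ``straightforward'' from Theorem~\ref{them1}, relying on the remark in Section~II that the Strang splitting has per-stage error $O(\Delta t^3)$ and hence cost $O(t^{3/2}\e^{-1/2})$, which is exactly your matching upper bound combined with the lower bound of Theorem~\ref{them1}. You merely fill in the details (the symmetry argument for the $O(\Delta t^3)$ local error and the telescoping accumulation) that the paper leaves implicit.
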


\begin{Coro}
The randomized algorithm Algorithm $2$ is asymptotically optimal.
\end{Coro}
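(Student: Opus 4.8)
The plan is to certify \textbf{Algorithm 2} as asymptotically optimal by matching the exponential count already computed for it in Section~II against the universal lower bound of Theorem~\ref{them1}. Here \emph{asymptotically optimal} means that the number $N$ of exponentials used by \textbf{Algorithm 2} to achieve error $\e$ has the same asymptotic order in $t$ and $\e$ as is forced on every admissible algorithm. Thus it suffices to exhibit an upper bound on $N$ for \textbf{Algorithm 2} and a lower bound on $N$ valid for all algorithms, and to check that the two have identical functional form.

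First I would recall the cost analysis of \textbf{Algorithm 2}. Writing $\Delta t = t/K$ and using Lemma~1 to bound the per-stage increase in trace distance by $2\norm{E(U_\omega)-U_0}+E(\norm{U_\omega-U_0}^2)=O(\Delta t^3)$, the error accumulated over the $K$ stages is $\e=O(K\Delta t^3)=O(t^3/K^2)$. Solving for $K$ gives $K=O(t^{3/2}\e^{-1/2})$, and since each stage consumes exactly $m$ exponentials with $m$ fixed, the total is $N=mK=O(t^{3/2}\e^{-1/2})$. Next I would invoke Theorem~\ref{them1}, which guarantees that any deterministic or randomized algorithm simulating $e^{-iHt}$ with error $\e$ uses $N=\Omega(t^{3/2}\e^{-1/2})$ exponentials. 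Combining the two estimates yields $N=\Theta(t^{3/2}\e^{-1/2})$ for \textbf{Algorithm 2}, which is precisely the claimed optimality.

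There is no deep obstacle here; the corollary is essentially an immediate consequence of the preceding theorem together with the upper-bound computation. The only point requiring care is the consistency of the two estimates: one must treat $m$, the number of Hamiltonian summands, as a fixed constant so that it is absorbed into the asymptotic constant, and one must confirm that the error notion lower-bounded in Theorem~\ref{them1}—the trace distance accumulated across stages, bounded stagewise through repeated application of Lemma~1—is the same quantity that the upper-bound analysis controls. Once these are aligned, the matching $O(\cdot)$ and $\Omega(\cdot)$ coincide and the conclusion follows.
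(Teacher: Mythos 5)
Your proposal is correct and matches the paper's reasoning: the paper treats this corollary as an immediate consequence of Theorem~\ref{them1}, implicitly pairing the lower bound $N=\Omega(t^{3/2}\e^{-1/2})$ with the upper bound $N=mK=O(t^{3/2}\e^{-1/2})$ already derived for \textbf{Algorithm 2} in Section~II via Lemma~1. Your write-up simply makes explicit the matching of the $O(\cdot)$ and $\Omega(\cdot)$ bounds (and the constancy of $m$) that the paper leaves unstated.
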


\section{Summary}

In summary, we provide the randomized model of quantum simulation,
and provide some randomized algorithms which are easier to implement
than certain deterministic algorithms, but have the same efficiency.
Moreover, we provide a lower bound for quantum simulation, therefore
prove the optimality of the deterministic algorithm based on Strang
splitting and one of our randomized algorithms. Note that, the lower
bound and the optimality is under the assumption $t_j$ is positive
in the simulation. Without this restriction, some algorithms have
faster running time than the lower bound \cite{Berry}. Furthermore,
randomized algorithms also bring certain benefits in this
unrestricted situation. For instance, when $m=2$, to simulate
$e^{-iH\Delta t}$, it needs at least $7$ exponentials to obtain an
error bound $\Theta(\Delta t^3)$ \cite{Suzuki}, however a randomized
algorithm can obtain the same error bound with only $4$ exponentials
\cite{future}.

We are grateful to Anargyros Papageorgiou, Joseph F. Traub, Columbia
University for their very helpful discussions and comments.

\end{document}